\documentclass{ifacconf}
\usepackage{graphicx} 
\usepackage{amsmath}
\usepackage{amsfonts}
\usepackage{mathtools}
\usepackage{dirtytalk}
\usepackage{xcolor}
\usepackage{arydshln}
\usepackage{bm}
\usepackage{natbib}

\usepackage{amssymb}

\newtheorem{lemma}{Lemma}
\newtheorem{theorem}{Theorem}

\newtheorem{remark}{Remark}
\newtheorem{definition}{Definition}

\newtheorem{proof}{Proof}

\AtEndEnvironment{problem}{\hfill$\Box$}
\AtEndEnvironment{proposition}{\hfill$\Box$}
\AtEndEnvironment{lemma}{\hfill$\Box$}
\AtEndEnvironment{theorem}{\hfill$\Box$}
\AtEndEnvironment{assumption}{\hfill$\Box$}
\AtEndEnvironment{condition}{\hfill$\Box$}
\AtEndEnvironment{remark}{\hfill$\Box$}
\AtEndEnvironment{corollary}{\hfill$\Box$}
\AtEndEnvironment{definition}{\hfill$\Box$}
\AtEndEnvironment{proof}{\hfill$\Box$}

\newcommand{\norm}[1]{\left|#1\right|}

\DeclareMathOperator{\R}{\mathbb{R}}

\providecommand{\SO}{\mathbf{SO}}
\providecommand{\SE}{\mathbf{SE}_2(3)}

\usepackage{xcolor}

\begin{document}
\begin{frontmatter}
\title{Relative Pose-Velocity Estimation Using Dual IMU Measurements and Relative Position Sensing\thanksref{footnoteinfo}}
\thanks[footnoteinfo]{This work was supported by the"Grands Fonds Marins" Project Deep-C, and the ASTRID ANR project ASCAR ANR-23-ASTR-0016. This research work is also supported in part by NSERC-DG RGPIN-2020-04759 and Fonds de recherche du Québec (FRQ).}

\author[First]{Alessandro Melis} 
\author[First]{Tarek Bouazza} 
\author[Second]{Soulaimane Berkane} 
\author[First,Third]{Tarek Hamel}

\address[First]{I3S-CNRS, Nice-Sophia Antipolis, France \\(email: melis@i3s.unice.fr, bouazza@i3s.unice.fr).}
\address[Second]{Department of Computer Science and Engineering, Université du Québec en Outaouais (UQO), QC J8X 3X7, Canada (email:soulaimane.berkane@uqo.ca).}
\address[Third]{Institut Universitaire de France, Nice-Sophia Antipolis, France (email:thamel@i3s.unice.fr).}
\begin{abstract}   
This paper addresses the problem of estimating the relative pose (position and orientation) and velocity of a vehicle with respect to a moving target, where both are equipped with Inertial Measurement Units (IMUs), assuming the availability of relative position or bearing measurements. The body-target relative dynamics are formulated on $\SE$ and recast into a linear time-varying (LTV) model in the ambient space $\mathbb{R}^{15}$, on which a deterministic Riccati observer is designed. We analyze the uniform observability (UO) conditions required to guarantee global exponential convergence of the estimation error in the ambient space for both measurement cases. In the case of relative position measurements, UO requires only a persistence-of-excitation condition on the target acceleration, whereas for bearing measurements, additional conditions are required. Building on this, a nonlinear complementary filter on $\SO(3)$ is designed to provide a smooth estimate of the orientation component of the state with almost global asymptotic stability. Finally, simulation results are provided to validate the proposed solution.
\end{abstract}

\begin{keyword}
Nonlinear observers and filter design, Estimation and filtering, Riccati observers, Lie Group, UAVs.
\end{keyword}
\end{frontmatter}
\section{Introduction}
The ability to estimate relative pose (position and orientation) is essential for feedback control in applications involving interacting physical systems. In this context, observability analysis plays a crucial role: it determines whether relative states can be reconstructed from available measurements and how robustly this can be achieved in the presence of noise and model uncertainties. While several works have addressed uniform observability for static targets (\cite{Gintrand2022}, \cite{hamel2017position}), the case of moving targets remains largely open, even though relative state estimation is central to many practical applications. Examples include autonomous landing, docking, and formation control, where control objectives are naturally expressed in relative coordinates.

Autonomous landing of UAVs on moving platforms has been extensively studied using Kalman/EKF-based fusion of IMU and vision/GNSS data \cite{Borowczyk2017, Falanga2017, Araar2017}. These contributions, however, primarily focus on guidance and control, without providing a detailed observability analysis of the estimation model. In contrast, the aerospace community has placed more emphasis on observability in relative pose estimation. Observability conditions have been derived for relative pose estimation between satellites, considering orbital and attitude dynamics \cite{Li2019}, and, in a constructive manner, between a satellite and the rotating Earth \cite{Reis2019}. Nonetheless, the existing results are either local in nature or applicable only to a very restricted class of systems.

In this work, we consider a dual-IMU body-target setup and formulate the relative dynamics on the Lie group $\SE$, which compactly captures orientation, linear velocity, and position. The available measurements include either the relative position of the target origin expressed in the body frame (e.g., acquired with a stereo camera) or, when unavailable, the relative bearing (e.g., when the target is far or the body is equipped with a monocular camera). Following \cite{Benahmed2025}, the relative dynamics are embedded in the ambient space $\mathbb{R}^{15}$ and recast as a linear time-varying (LTV) system. This formulation, firstly applied to the attitude estimation problem in \cite{Batista2012Single,Batista2012Cascade}, overcomes the topological limitations of $\SO(3)$ and enables the design of a linear Riccati observer for the model, thereby allowing the study of estimation-error convergence via the classical LTV uniform observability.

When relative position measurements are available, we show that persistence of excitation of the target acceleration is sufficient to guarantee uniform observability. In the bearing-only case, the time-varying output matrix can couple the bearing excitation with the target acceleration, leading to potential loss of observability along degenerate trajectories where these signals are not independent. To overcome this issue, we introduce persistence of excitation conditions that preclude such degeneracies.

The contributions of this work are threefold: (i) formulation of the dual-IMU body-target relative dynamics on $\SE$ and its embedding into $\mathbb{R}^{15}$, (ii) design of a linear Riccati observer with uniform observability analysis for both relative position and bearing measurements, and (iii) design of a nonlinear complementary filter on $\SO(3)$ providing smooth orientation estimates with almost global asymptotic stability. Simulations validate the proposed approach and illustrate the estimator's convergence properties under realistic conditions.

The rest of this paper is organized as follows: in Section~\ref{section:preliminary material}, we provide the notation and uniform observability definition relevant for the proposed observer stability. Section~\ref{section:problem description} introduces the body-target relative kinematics and describes the measurements used for observer design. Section~\ref{section:observer design} features the proposed observer and uniform observability conditions for both the cases of relative position and bearing measurements. Section~\ref{section:simulations} shows simulation results. Finally, concluding remarks are reported in Section~\ref{section:conclusions}.

\section{Preliminary Material}\label{section:preliminary material}
\subsection{Notation}
The Euclidean norm of a vector $x \in \mathbb{R}^n$ is denoted by $|x|$, the 2-norm and Frobenius norm of a matrix $X\in\R^{n\times n}$ are denoted, respectively by $\|X\|$ and $\|X\|_F$.

    For any vector $\Omega:=\begin{bmatrix}
        \Omega_1 & \Omega_2 & \Omega_3
    \end{bmatrix}^\top \in \mathbb{R}^3$, the skew-symmetric matrix associated with the cross product is defined as 
    \[ 
    \Omega_\times =
    \begin{bmatrix}
        0 & -\Omega_3 & \Omega_2 \\
        \Omega_3 & 0 & -\Omega_1 \\
        -\Omega_2 & \Omega_1 & 0
    \end{bmatrix}.
    \]
    
    The unit n-sphere is represented as $S^n := \{v \in \mathbb{R}^{n+1} \mid |v| = 1\}$.

    We denote the set of $n \times n$ positive definite matrices is denoted by $\mathbb{S}_+(n)$, and the identity matrix is denoted by $I_n \in \mathrm{R}^{n \times n}$
    
    The special orthogonal group, denoted as $\SO(3)$, represents the Lie group of 3D rotations and is given by
    \[
    \SO(3) := \{R \in \mathbb{R}^{3\times3} \mid R^{\top} R = RR^\top = I_3, \det(R) = 1\}.
    \]
    Its associated Lie algebra is defined as
    \[
    \mathfrak{so}(3) := \{\Omega_\times \in \mathbb{R}^{3\times3} \mid \Omega \in \mathbb{R}^3\}.
    \]
    The extended special euclidean group, denoted as $\SE$, represents the Lie group of 3D rigid motions that includes rotations, translations and linear velocities, and is given by
    \begin{equation*}
        \SE:=\left\{\left[
            \begin{array}{ccc}
                R & v & x \\
                0_{2\times3} & \multicolumn{2}{c}{I_2}
            \end{array}
        \right]\in\R^{5\times 5}|R\in\SO(3),p,v\in\R^3\right\}.
    \end{equation*}
    Its associated Lie algebra is defined as
    \begin{equation*}
        \mathfrak{se}_2(3) := \left\{\left[
            \begin{array}{ccc}
                \Omega_\times & a & v \\
                0_{2\times3} & \multicolumn{2}{c}{0_{2\times2}}
            \end{array}
        \right]\in\R^{5\times 5}|R\in\mathfrak{so}(3),p,v\in\R^3\right\}.
    \end{equation*}
    For any $y\in S^{n-1}$, the operator $\pi_y:=I_{n}-yy^\top$ projects any vector $x\in\R^n$ to the plane orthogonal to $y$.
    The Kronecker product between two matrices $A$ and $B$ is denoted by $A\otimes B$.
    The vectorization operator vec$:\R^{m\times n}\to\R^{mn}$ stacks the columns of a matrix $A\in\R^{m\times n}$ into a single column vector in $\R^{mn}$.

\subsection{Observability Principles and Requirements}
Consider a generic linear time-varying (LTV) system:
\begin{subequations} \label{eq:LTV_system_dynamics}
    \begin{align}
        \dot{\bm{x}} &= A(t) \bm{x} + B(t) \bm{u} \label{eq:state_equation}, \\
       \bm{y} &=C(t) \bm{x} \label{eq:measurement_equation},
    \end{align}
\end{subequations}
where \( \bm{x} \in \mathbb{R}^n \) denotes the system state, \( \bm{u} \in \mathbb{R}^s \) the system input, and \( \bm{y} \in \mathbb{R}^m \) the system output. The following definition of observability related to this system is borrowed from the works \cite{aeyels1998asymptotic,chen1984linear}.
\begin{definition}[Uniform Observability] \label{def:uniform_obs}
The system \eqref{eq:LTV_system_dynamics} is said to be \textit{uniformly observable} if there exist constants $\delta, \mu > 0$ such that, for all $t \geq 0$:
\begin{equation}  
W(t, t + \delta) \succeq \mu I_n \succ 0, 
\label{eq:main Uniform Observability condition}
\end{equation}  
where  
\begin{equation}  
W(t, t + \delta) \coloneqq \frac{1}{\delta} \int_{t}^{t+\delta} \Phi^\top(s, t)C^\top(s)C(s)\Phi(s, t) \, ds \notag 
\end{equation} is the observability Gramian, and  
$\Phi(s, t)$ denotes the state transition matrix associated with $A(t)$, defined by: $\frac{d}{dt} \Phi(s, t) = A(t) \Phi(s, t)$, $\Phi(t, t) = I_n$.

If condition \eqref{eq:main Uniform Observability condition} holds, the pair $(A(t), C(t))$ is said to be \textit{uniformly observable}.
\end{definition}

\section{Problem description}\label{section:problem description}
We consider the problem of estimating the relative pose and linear velocity of a body object with respect to (w.r.t.) a, possibly moving, target object.
We assume both the body and the target are equipped with inertial measurement units (IMUs) and provide linear acceleration and angular velocity measurements.

Let $\{\mathcal{B}\}, \{\mathcal{T}\}$ and $\{\mathcal{I}\}$ denote, respectively, the body reference frame, the target reference frame and an inertial reference frame. Denote as $\{e_1,e_2,e_3\}$ the standard basis vectors of $\mathcal{I}$. The orientation of frame $\{\mathcal{B}\}$ (resp. $\{\mathcal{T}\}$) w.r.t. frame $\{\mathcal{I}\}$ is represented by the rotation matrix $Q_B \in \mathbf{SO}(3)$ (resp. $Q_T \in \mathbf{SO}(3)$). 

Denote as $p_B\in \mathbb{R}^3$ (resp. $p_T\in \mathbb{R}^3$), $v_B\in\mathbb{R}^3$ (resp. $v_T\in \mathbb{R}^3$) and $a_B\in \mathbb{R}^3$ (resp. $a_T \in \mathbb{R}^3$) the position, linear velocity and acceleration of the body (resp. target) expressed in frame $\mathcal{I}$.
Let $\Omega\in \mathbb{R}^3$ (resp. $\omega_T\in \mathbb{R}^3$) represent the angular velocity of the body (resp. target) expressed in frame $\mathcal{B}$ (resp. $\mathcal{T}$).

The body and target dynamics are given by the equations
\begin{equation}\label{eq: body target dynamics}
    \begin{cases}
    \dot p_B = v_B\\
    \dot v_B = Q_B a_B + g e_3\\
    \dot{Q}_B = Q_B \Omega_{\times}, 
    \end{cases} \qquad \begin{cases}
    \dot p_T = v_T\\
    \dot v_T = Q_T a_T + g e_3\\
    \dot{Q}_T = Q_T \omega_{T_\times}, 
    \end{cases}
\end{equation}
where $g$ denotes the gravitational acceleration, $a_B$ (resp. $a_T$) denotes the specific acceleration of the body (resp. the target), and 
$\Omega$ (resp. $\omega_T$) denotes the angular velocity of the body (resp. the target), 
all provided by their respective IMUs.

We now adopt the $\SE$ group framework to compactly describe the relative motion dynamics from $\mathcal{T}$ to $\mathcal{B}$. The extended 3D rigid motions from $\mathcal{B}$ to $\mathcal{I}$ and from $\mathcal{T}$ to $\mathcal{I}$, can also be represented as elements of $\SE$ as $X_B,X_T\in\SE$, defined by
\begin{equation}
    X_B := \left[
            \begin{array}{c;{2pt/2pt}cc}
                Q_B & v_B & p_B \\
                \hdashline[2pt/2pt]
                0_{2\times3} & \multicolumn{2}{c}{I_2}
            \end{array}
        \right], \quad X_T := \left[
            \begin{array}{c;{2pt/2pt}cc}
                Q_T & v_T & p_T \\ \hdashline[2pt/2pt]
                0_{2\times3} & \multicolumn{2}{c}{I_2}
            \end{array}
        \right].
\end{equation}
In this framework, the body and target dynamics \eqref{eq: body target dynamics} can be rewritten as dynamics on $\SE$; see \cite[Lemma 4.1]{vanGoor2025}, in the form
\begin{subequations}\label{eq:XB XT dynamics}
\begin{align}\label{}
    \dot X_B&= X_BU_B+GX_B + NX_B-X_BN, \\
    \dot X_T&= X_TU_T+GX_T + NX_T-X_TN,
\end{align}
\end{subequations}
where
\begin{equation}\label{eq:SE23 U G N}
    \begin{aligned}
    &U_B:=\left[
            \begin{array}{c;{2pt/2pt}cc}
                \Omega_\times & a_B & 0_{3\times 1} \\
                \hdashline[2pt/2pt]
                0_{2\times3} & \multicolumn{2}{c}{0_{2\times2}}
            \end{array}
        \right], \quad U_T:=\left[
            \begin{array}{c;{2pt/2pt}cc}
                \omega_{T\times} & a_T & 0_{3\times1} \\
                \hdashline[2pt/2pt]
                0_{2\times3} & \multicolumn{2}{c}{0_{2\times2}}
            \end{array}
        \right], \\ &G:=\left[
            \begin{array}{c;{2pt/2pt}cc}
                0_{3\times 3} & g & 0_{3\times 1} \\
                \hdashline[2pt/2pt]
                0_{2\times3} & \multicolumn{2}{c}{0_{2\times2}}
            \end{array}
        \right], \qquad\ N:=\left[
            \begin{array}{ccc}
                0_{3\times 3} & 0_{3\times 2} \\
                0_{2\times3} & \begin{bmatrix}
                    0 & -1\\ 0 & 0
                \end{bmatrix}
            \end{array}
        \right].
    \end{aligned}
\end{equation}
Define
\begin{align*}
    R:=Q_T^\top Q_B,\quad\quad
    \xi&:=-Q_B^\top\mathring\xi, \qquad \mathring\xi:=(p_T-p_B),\\
    \nu&:=-Q_B^\top\mathring\nu, \qquad \mathring\nu:=(v_T-v_B),
\end{align*}
where $R\in \SO(3)$ is
the relative orientation from frame $\mathcal{B}$ to frame $\mathcal{T}$, $\mathring\xi\in\R^3$ is the relative position from frame $\mathcal{B}$ to frame $\mathcal{T}$ expressed w.r.t. frame $\mathcal{I}$, $\xi\in\R^3$ is the relative position from frame $\mathcal{B}$ to frame $\mathcal{T}$ expressed w.r.t. frame $\mathcal{B}$, $\mathring\nu\in\R^3$ is the relative velocity from frame $\mathcal{B}$ to frame $\mathcal{T}$ expressed w.r.t. frame $\mathcal{I}$, $\nu\in\R^3$ is the relative velocity from frame $\mathcal{B}$ to frame $\mathcal{T}$ expressed w.r.t. frame $\mathcal{B}$.

Exploiting the $\SE$ group inverse and multiplication, we can express the relative rigid motion from $\mathcal{B}$ to $\mathcal{T}$ as $X\in\SE$ given by

    \begin{align}
    X:=X_B^{-1}X_T&= \left[
            \begin{array}{c;{2pt/2pt}cc}
                Q_B^\top & -Q_B^\top v_B & -Q_B^\top p_B \\
                \hdashline[2pt/2pt]
                0_{2\times3} & \multicolumn{2}{c}{I_2}
            \end{array}
        \right]\left[
            \begin{array}{c;{2pt/2pt}cc}
                Q_T & v_T & p_T \\
                \hdashline[2pt/2pt]
                0_{2\times3} & \multicolumn{2}{c}{I_2}
            \end{array}
        \right]\notag\\&= \left[
            \begin{array}{c;{2pt/2pt}cc}
                Q_B^\top Q_T & Q_B^\top\mathring\nu & Q_B^\top\mathring\xi \\
                \hdashline[2pt/2pt]
                0_{2\times3} & \multicolumn{2}{c}{I_2}
            \end{array}
        \right]  \notag\\
        &=\left[
            \begin{array}{c;{2pt/2pt}cc}
                R^\top & -v & -\xi \\
                \hdashline[2pt/2pt]
                0_{2\times3} & \multicolumn{2}{c}{I_2}
            \end{array}
        \right].\label{eq:X SE23}
    \end{align}  
This notation aligns with the available measurement processes and the estimation problem being addressed.

Differentiating $X$ and substituting \eqref{eq:XB XT dynamics}, it can be verified that the $X$-dynamics are described by
\begin{equation}\label{eq:X SE23 dynamics}
    \dot X=XU_T - U_BX +[N,X].
\end{equation}

Defining the state $(R,\xi,\nu)\in\SO(3) \ltimes \R^6$ from the quantities of interest, we can rewrite \eqref{eq:X SE23 dynamics} as
\begin{equation} \label{eq:relative dynamics}
    \begin{cases}
    \dot \xi &= -\Omega_\times \xi + \nu \\
    \dot{\nu} & = -\Omega_\times \nu + a_B - R^\top a_T\\
    \dot R &= -\omega_{T_\times} R + R\Omega_{\times}.
    \end{cases} 
\end{equation}

As for the output, we first analyze the situation where one has access to the relative position $\xi$ of the origin of the target with respect to the body, expressed in the body frame, which can be obtained, for instance, from a stereo camera:
\begin{equation}\label{eq:output position 1}
    y := \xi \in \mathbb{R}^3.
\end{equation}

We then consider the more challenging case, where only the relative bearing
\begin{equation}\label{eq:out bearing}
    y := \frac{\xi}{|\xi|} \in S^2
\end{equation}
is available as the output measurement. This situation corresponds to either a very distant target or to measurements obtained from a monocular camera.

In both cases, the objective is to design an observer for the system defined by \eqref{eq:relative dynamics}.

\section{Observer Design}\label{section:observer design}

The proposed observer is given by the cascade interconnection of two estimation schemes. The first observer is a \textit{Riccati observer} (\cite{hamel2017position}) developed in the ambient space $\R^{15}$ in order to guarantee linearity of the error dynamics. As a consequence, the estimate of $R$ is not an element of $\SO(3)$. A nonlinear complementary filter (\cite{Mahony_Hamel_Pflimlin}) is then cascaded to the first observer in order to smoothly reconstruct $\SO(3)$ attitude estimates at all times.

\begin{figure}[ht]
    \centering    \includegraphics[width=\linewidth]{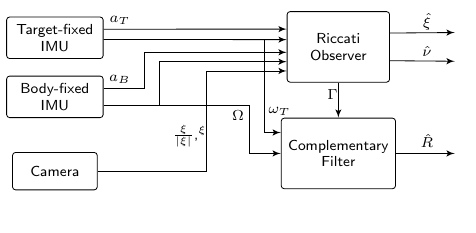}
    \caption{Proposed cascaded observer architecture.}
    \label{fig:enter-label}
\end{figure}

\subsection{Riccati observer in the ambient space $\R^{15}$}
Based on the available measurements and the system dynamics \eqref{eq:relative dynamics}, $Q_B$ and $Q_T$ cannot be uniquely recovered. Only the relative orientation $R$ may be observable.

This follows from the fact that the dynamic model of interest does not depend on $Q_T$, and the measurement equation is likewise independent of $Q_T$, hence $Q_T$ and $Q_B$ can only be estimated up to an unknown constant rotation matrix.

To make this explicit, introduce the auxiliary state variable $Z\in \mathbf{SO}(3)$ that inherits the dynamics of $Q_T$. That is:
\begin{align}\label{eq:Z dynamics}
    \dot Z =Z\omega_{T_\times},\qquad 
    Z(0)=I_3,
\end{align}
and then define the transformed quantities:
\begin{equation*}
      a_T^z = Z a_T,
     \qquad
     R_z = Z R.
\end{equation*}
One verifies that the dynamics \eqref{eq:relative dynamics} can be written as:
\begin{equation} \label{dyn obs}
    \begin{cases}
    \dot \xi &= -\Omega_\times \xi + \nu \\
    \dot{\nu} & = -\Omega_\times \nu + a_B - R_z^\top a^z_T \\
    \dot{R_z} &= R_z \Omega_\times.
    \end{cases} 
\end{equation}
The above model is invariant under arbitrary left translations of $Q_T$. As the measurements do not depend on $Q_T$, the only potentially observable quantity is the relative orientation $R$ or equivalently $R_z$ ($R = Z^\top R_z$).

Let $\hat{\xi}, \hat{\nu} \in \mathbb{R}^3$ denote the estimates of the relative position $\xi$ and relative velocity $\nu$, respectively.  
Define $\gamma_z \in \mathbb{R}^9$ as the estimate of $\mathrm{vec}(R_z^\top)$, so that 
\[
    \Gamma_z := \mathrm{vec}^{-1}(\gamma_z) \in \mathbb{R}^{3 \times 3}
\]
represents the estimate of $R_z^\top$.

The proposed linear observer, with state $(\hat \xi, \hat \nu, \gamma_z)\in\R^{15}$, is of the form
\begin{equation} \label{eq:obs Riccati}
    \begin{cases}
    \dot{\hat{\xi}} &= -\Omega_\times \hat{\xi} + \hat{\nu} + \sigma_\xi \\
    \dot{\hat{\nu}} & = -\Omega_\times \hat{\nu} + a_B - \Gamma_z a^z_T + \sigma_v \\
    \dot{\gamma_z} &= -(I_3\otimes\Omega_\times) \gamma_z+ \sigma_{\gamma},
    \end{cases} 
\end{equation}
with $\sigma_{\xi},\sigma_{\nu}\in \mathbb{R}^3$ and $\sigma_{\gamma}\in\R^9$ the innovation terms to be determined.

\begin{remark}
    By definition, $\Gamma_z$ is not constrained to lie in $\mathrm{SO}(3)$, even though it will be shown to converge to $R_z^\top$. Since $Z$ is well-conditioned, the convergence of $\Gamma_z$ to $R_z^\top$ implies that the estimate $\Gamma := \Gamma_z Z$ converges to $R^\top$ for any $Z \in \mathrm{SO}(3)$ satisfying \eqref{eq:Z dynamics}.
\end{remark}

The linear vectorized dynamics \eqref{eq:obs Riccati} allow to design the observer in the Riccati framework. Consider the error state vector $\bm{x} = (\tilde{\xi}, \tilde{\nu}, \tilde{\gamma}) \in \R^{15}$, with $$ \tilde{\xi} = \xi - \hat{\xi}, \qquad \tilde{\nu} = \nu - \hat{\nu} \qquad \tilde{\gamma}_z := \mathrm{vec}(R_z^\top) - \gamma_z. $$
Then, the error dynamics can be written as
\begin{subequations}\label{dyn x}
    \begin{align}
        \dot{\bm{x}} &= A(t) \bm{x} + \bm{u} \label{eq:x dynamics} \\
        \bm{y} &= C \bm{x}\label{eq:out pos}
    \end{align}
\end{subequations}
with 
\begin{align}
    A(t) &= \bar{A}(t) \otimes I_3 + S(t),\\
    S(t) &= -I_5 \otimes \Omega_\times, \qquad
    \bar{A}(t) = 
    \begin{bmatrix}
        0 & 1 & 0_{1\times 3} \\
        0 & 0 & -a^{z\top}_T \\
        0_{3\times 1} & 0_{3 \times 1} & 0_{3 \times 3}
    \end{bmatrix} \notag \\
\end{align}
The $C$ matrix depends on the considered measurement.

When the position output~\eqref{eq:output position 1} is used, we define
$\mathbf{y} = \xi - \hat{\xi}$, or equivalently:
\begin{equation}
   \bm{y} = C \bm x \mbox{ with } C = \bar C \otimes I_3,
    \qquad
    \bar C = 
    \begin{bmatrix}
        1 & 0_{1\times 4}
    \end{bmatrix}.
    \label{eq:C position}
\end{equation}

When only the bearing measurement~\eqref{eq:output bearing} is used, we exploit the identity
$\pi_y \xi \equiv 0$ and define $\bm{y} = \pi_y(\xi - \hat{\xi})$.  
This yields the time-varying output matrix
\begin{equation}\label{eq:output bearing}
  \bm{y} = C(t) \bm x \mbox{ with }   C(t) = \bar C \otimes \pi_y(t).
\end{equation}

The innovation terms in $\mathbf{u} = [\sigma_p^\top,\; \sigma_\nu^\top,\; \sigma_\gamma^\top]^\top$ are chosen such that $\bm{u} = PC^\top D \bm{y}$, with $P$ solution to the continuous Riccati equation
    \begin{equation}\label{eq:riccati}
        \dot{P} = AP + PA^\top - PC^\top D CP + V, \quad P(0) \in \mathbb{S}_+(15),
    \end{equation}
    with $V \in \mathbb{S}_+(15)$ and $D \in \mathbb{S}_+(3)$ bounded continuous matrix-valued functions.

Uniform observability of the pair $(A,C)$, with $C$ obtained either from
\eqref{eq:C position} or \eqref{eq:output bearing}, together with the positive
definiteness of $D$ and $V$, is sufficient to guarantee well-posedness,
well-conditioning, and boundedness of the solutions to \eqref{eq:riccati} (see
\cite[Lemmas~2.5 and~2.6]{hamel2017position}). This, in turn, guarantees the
global exponential stability of $\bm{x} = 0$.

\begin{lemma}[{\em Position output measurement}]\label{lemma:GES PE a_T}
    Consider the dynamics \eqref{eq:obs Riccati} with output equation \eqref{eq:C position} and let $a_T$ continuous and bounded. If there exist $\delta,\mu>0$ such that for all $t\geq0$
    \begin{equation}\label{eq:PE a}
        \frac{1}{\delta}\int_t^{t+\delta}a^z_T(s) a^{z\top}_T(s)\,ds\succeq \mu I_3, 
    \end{equation}
    then, the pair $(A,C)$ is uniformly observable and the equilibrium point $\bm{x}=0$ is globally exponentially stable.
\end{lemma}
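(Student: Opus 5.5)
The plan is to remove the rotational term $S(t)=-I_5\otimes\Omega_\times$ by a time-varying change of coordinates, so that uniform observability of $(A,C)$ reduces to uniform observability of a simpler pair built only from $\bar A(t)$ and $a^z_T$. Let $R_B(t)\in\SO(3)$ solve $\dot R_B=R_B\Omega_\times$, $R_B(0)=I_3$, and set $T(t):=I_5\otimes R_B(t)$. Because $I_5\otimes R_B$ commutes with $\bar A\otimes I_3$, the new state $\bm z:=T\bm x$ obeys $\dot{\bm z}=(\bar A(t)\otimes I_3)\bm z+T\bm u$. The output becomes $\bm y=C\bm x=(\bar C\otimes R_B^\top)\bm z$. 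Since $R_B$ is orthogonal, $\|\bm y\|=\|(\bar C\otimes I_3)\bm z\|$. Moreover $T$ is orthogonal, so it is a Lyapunov transformation. The observability Gramian of $(A,C)$ is therefore congruent, through orthogonal matrices, to that of the pair $(\bar A(t)\otimes I_3,\ \bar C\otimes I_3)$, and it suffices to establish uniform observability of the latter.

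For the reduced pair, write $\bm z=(z_1,z_2,z_3)$ with $z_1,z_2\in\R^3$ and $z_3\in\R^9$. Here $z_3$ is constant, and $a^{z\top}_T z_3$ is interpreted as $\Gamma$-type action, i.e.\ the $3$-vector $M(z_3)\,a^z_T$ with $M=\mathrm{vec}^{-1}(z_3)$. The output along free trajectories is then
\begin{equation*}
y(s)=z_1+(s-t)z_2-\int_t^s\!\!\int_t^\tau M a^z_T(r)\,dr\,d\tau .
\end{equation*}
Following the standard Riccati-observer argument (\cite{hamel2017position}), I will prove UO by contradiction. Suppose no $\mu$ works. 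Then there are sequences $t_k$ and unit vectors $\bm z_k$ with $\int_{t_k}^{t_k+\delta}|y_k(s)|^2ds\to0$. Boundedness of $a^z_T$ (which holds since $Z$ is orthogonal and $a_T$ is bounded) and Arzelà–Ascoli then yield a limit unit vector $(z_1,z_2,M)$ with $y\equiv0$ on $[0,\delta]$, for a limit input $\bar a$ that still satisfies \eqref{eq:PE a}.

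Differentiating twice gives $M\bar a(s)\equiv0$ a.e. One needs some care here, since $\bar a$ is only a weak-$*$ limit. Rather than differentiate, I will work directly with the integrated identity, which is preserved under weak limits. From $M\bar a\equiv0$ and the PE condition, $M\big(\int\bar a\bar a^\top\big)M^\top=0$ together with $\int\bar a\bar a^\top\succeq\mu\delta I$ gives $M=0$. Then $z_1+(s-t)z_2\equiv0$, so $z_1=z_2=0$, which contradicts $|\bm z|=1$.

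I expect the main obstacle to be the compactness step. The PE inequality must pass to the limit, and a bilinear quantity is not weak-$*$ continuous in general. To avoid this, I would first try a direct quantitative argument that bypasses limits. This means lower-bounding $\int|y|^2$ by splitting into the case where $|M|$ is small relative to $(z_1,z_2)$, handled by the polynomial Gramian $\int|z_1+sz_2|^2$, and the case where it is not. In the latter case I would use a lemma in the spirit of \cite{hamel2017position}: if $\int\int M\bar a$ is nearly affine in $s$, then $\int|M\bar a|^2$ is small, and this is made quantitative using the boundedness of $\dot a$. If that derivative bound is unavailable, I will instead use continuity and boundedness of $a_T$ to obtain uniform equicontinuity of the double integrals, which suffices.

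Finally, once UO of $(A,C)$ holds, \cite[Lemmas~2.5, 2.6]{hamel2017position} give boundedness and well-conditioning of $P$. Then $\mathcal V=\bm x^\top P^{-1}\bm x$ serves as a strict Lyapunov function, and global exponential stability of $\bm x=0$ follows.
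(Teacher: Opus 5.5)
Your route is essentially the paper's. $T=I_5\otimes R_B$ is the change of coordinates of the paper's technical lemma, since $Q_B=Q_B(0)R_B$. The paper likewise argues by contradiction and differentiates the output twice to isolate $M a^z_T$; it first eliminates $(z_1,z_2)$ with a Schur complement on the $(1,s-t)$ block and works with the scalar pair $(\bar A,\bar C)$, which is only cosmetic.

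The genuine gap is the step you flag yourself: that the limit input $\bar a$ still satisfies the persistence-of-excitation condition. Boundedness only gives weak-$*$ convergence of the translates $a^z_T(t_k+\cdot)$, and weak limits can lose excitation entirely. Your fallback does not repair this. Equicontinuity of the double integrals follows automatically from boundedness, and it determines only the weak limit of $a^z_T$, not its energy. In fact no argument can close the step under the stated hypotheses, because the uniform-observability claim itself fails. Take $\omega_T=0$ (so $a^z_T=a_T$) and $a_T(s)=(\cos\phi(s),\sin\phi(s),\cos 2\phi(s))$ with $\phi(s)=(s+c)^2$ and $c$ large. This signal is continuous and bounded, and it satisfies the PE condition with $\delta=1$ and $\mu$ close to $1/2$. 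Yet integration by parts gives $\bigl|\int_t^\tau a_T(\sigma)\,d\sigma\bigr|\le C/(t+c)$. Hence $\int_t^s\int_t^\tau a_T(\sigma)\,d\sigma\,d\tau=O(\delta/(t+c))$ on $[t,t+\delta]$, and the $\tilde\gamma_z$-block of the Gramian tends to zero as $t\to\infty$, for every $\delta$.

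What is missing is uniform continuity of $a^z_T$, for example boundedness of $\dot a_T$ and $\omega_T$. Your quantitative variant using a bound on $\dot a$ is fine, but that bound is precisely this extra hypothesis. With it, apply Arzel\`a--Ascoli to the translates $a^z_T(t_k+\cdot)$ themselves, not only to their double integrals. A subsequence then converges uniformly on $[0,\delta]$, so $\frac1\delta\int_0^\delta\bar a\bar a^\top\succeq\mu I_3$ survives the limit and $\bar a$ is continuous. Differentiating $y\equiv0$ twice gives $M\bar a\equiv0$ classically, and your conclusion $M=0$, $z_1=z_2=0$ goes through.

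Be aware that the paper's proof has the same hole. It applies Lemma 1 of \cite{Morin2017} after checking only that the second derivative $-a^{z\top}_T v$ is bounded. That suffices to pass the vanishing of the integral from the Schur-complement integrand to its first derivative, but not to its second. The example above shows that this step genuinely needs equicontinuity of $a^z_T$.
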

The proof of this result is reported in Appendix~\ref{proof: lemma GES PE a_T}.

Since $a_T^{z} = Z a_T$, the persistence of excitation of $a_T^{z}$ in
\eqref{eq:PE a} results from the combined effect of the target rotational
dynamics, through \eqref{eq:Z dynamics}, and the target linear acceleration
$a_T$. It is straightforward to verify that this condition is fulfilled either
by the persistent excitation of $a_T$ (with a constant target orientation,
i.e., $\omega_T = 0$), or by the persistent excitation of $a_T^{z}$ when $a_T$
is constant.

When position measurements are not directly available and only the bearing measurement \eqref{eq:output bearing} is accessible, stricter conditions are required to guarantee the uniform observability of $(A,C)$. This stems from the inherently limited information provided by bearing-only measurements.

Denote by $y_0 = Q_B y$ the measured bearing expressed in $\mathcal{I}$.  
Define 
\begin{equation}
    M(s,t) := \pi_{y_0}(s)\left(\Lambda_T(s,t) - [\, I_3 \;\; (s-t) I_3 \,]\rho \right),
\end{equation}
where the matrices $\Lambda_T$ and $\rho$ are introduced for a given $\delta>0$:
\begin{align}
    \Lambda_T(s,t) 
        &:= \int_t^s (\tau - s)\, a_T^{z\top}(\tau)\, d\tau \otimes I_3,\\[0.2cm]
     \rho(t,\delta) 
        &:= \Lambda_\pi^{-1}(t,\delta)\, B(t,\delta)
\end{align}    
with    
 \begin{align*}   
    \Lambda_\pi(t,\delta) 
        &:= \int_t^{t+\delta}
            \begin{bmatrix}
                1 & s-t \\
                s-t & (s-t)^2
            \end{bmatrix}
            \otimes \pi_{y_0}(s)\, ds, \mbox{ and }\\[0.2cm]
    B(t,\delta) 
        &:= \int_t^{t+\delta}
            \begin{bmatrix}
                1 \\ s-t
            \end{bmatrix}
            \otimes \pi_{y_0}(s)\, \Lambda_T(s,t)\, ds.
\end{align*}

\begin{lemma}[{\em Bearing output measurement}]\label{lemma:bearing}
    Consider the system \eqref{eq:x dynamics}-\eqref{eq:output bearing} and suppose $a_T$ continuous and bounded. If there exist $\mu,\delta>0$ such that for all $t\geq 0$
    \begin{subequations}
        \begin{align}
            &\frac{1}{\delta}\int_t^{t+\delta}\pi_{y_0}(s)\,ds\succeq\mu I_3 \label{eq:PE pi},\\
            &\frac{1}{\delta}\int_t^{t+\delta} M^\top(s,t) M(s,t)\,ds\succeq\mu I_9,\label{eq:PE schur}
        \end{align}
    \end{subequations}
    then the pair $(A,C)$ is uniformly observable and the equilibrium point $\bm{x}=0$ is globally exponentially stable.
\end{lemma}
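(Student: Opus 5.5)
First I remove the rotational part of $A(t)$ by a change of coordinates. Since $S(t)=-I_5\otimes\Omega_\times$ commutes with $\bar A(t)\otimes I_3$, I set $\bar{\bm x}=(I_5\otimes Q_B)\bm x$ and work in inertial coordinates. In these coordinates the dynamics matrix becomes $\bar A(t)\otimes I_3$. The output matrix becomes $\bar C\otimes \pi_y Q_B^\top$. Because $\pi_y=Q_B^\top\pi_{y_0}Q_B$ and $Q_B$ is orthogonal, the Gramian computed with output matrix $\bar C\otimes\pi_{y_0}(s)$ has the same eigenvalues. Uniform observability is invariant under such bounded, boundedly invertible, Lyapunov transformations, so it suffices to show that $(\bar A\otimes I_3,\ \bar C\otimes\pi_{y_0})$ is uniformly observable.

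Next I write the transition matrix explicitly, since $\bar A$ is nilpotent in structure. For an initial condition $(\xi_0,\nu_0,\gamma_0)$ at time $t$, the first block evolves as
\[
\xi(s)=\xi_0+(s-t)\nu_0+\Lambda_T(s,t)\gamma_0 ,
\]
with $\Lambda_T$ exactly as defined before the statement, up to the sign convention. The quadratic form of the Gramian is then
\[
\frac1\delta\int_t^{t+\delta}\big|\pi_{y_0}(s)\big([I_3\ (s-t)I_3]\,z+\Lambda_T(s,t)\gamma_0\big)\big|^2ds,
\]
where $z=(\xi_0,\nu_0)$. Its matrix is a $2\times2$ block matrix with blocks $\Lambda_\pi$, $B$ (off-diagonal, possibly with a sign), and $\int\Lambda_T^\top\pi_{y_0}\Lambda_T$.

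To show this block matrix is $\succeq\mu' I_{15}$, I use a Schur complement. Uniform positivity of $\Lambda_\pi$ should follow from \eqref{eq:PE pi}. The argument is that $\int [1\ \ s-t]^\top[1\ \ s-t]\otimes\pi_{y_0}\,ds$ cannot degenerate: if $|\pi_{y_0}(a+b(s-t))|^2$ integrates to a small value, then splitting $[t,t+\delta]$ into two subintervals and applying \eqref{eq:PE pi} on each (with a suitably rescaled $\delta$) forces $a$ and $b$ to be small. This step needs $\pi_{y_0}$ to be PE on every window of half the length, which can be arranged by redefining $\delta$ and $\mu$. With $\Lambda_\pi$ invertible, minimizing the quadratic form over $z$ gives $z=-\rho\gamma_0$ with $\rho=\Lambda_\pi^{-1}B$. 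The resulting Schur complement is exactly $\int M^\top M\,ds$, which \eqref{eq:PE schur} bounds below by $\mu I_9$. Combining the two gives a uniform lower bound on the whole Gramian; the constant depends on the upper bounds of $\Lambda_\pi$, $B$ and $\Lambda_T$, which are finite because $a_T$ is bounded and $\delta$ is fixed. Global exponential stability then follows from the Riccati results cited in the paper (\cite[Lemmas 2.5, 2.6]{hamel2017position}), since $C(t)$ is bounded.

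I expect the main obstacle to be the careful bookkeeping in the Schur complement step: getting the signs and the Kronecker ordering consistent with the stated definitions of $M$, $\rho$ and $B$, and handling the orthogonal change of frame from $\pi_y$ to $\pi_{y_0}$. Deriving a uniform lower bound on $\Lambda_\pi$ from \eqref{eq:PE pi} is the secondary difficulty.
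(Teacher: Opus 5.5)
Your overall route is the paper's. You use the change of coordinates $\bar{\bm x}=(I_5\otimes Q_B)\bm x$ (Lemma~\ref{lemma:obs Abar,C = obs A,C}) and the explicit transition matrix giving $\pi_{y_0}(s)[\,I_3\ \ (s-t)I_3\ \ \Lambda_T(s,t)\,]$. You then form the $2\times 2$ block Gramian and identify its Schur complement with respect to the $(\xi,\nu)$-block as $\int M^\top M$. Your minimisation $z=-\rho\gamma_0$ is a correct way to carry out the paper's ``direct computation''. Finally you conclude with Lemma~\ref{lemma:schur}.

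The gap is in the step you call secondary: uniform positive definiteness of $\Lambda_\pi(t,\delta)$. Your two-subinterval argument fails. Condition \eqref{eq:PE pi} only bounds $\int|\pi_{y_0}(s)c|^2ds$ from below for a \emph{constant} vector $c$. On each subinterval, $a+(s-t)b$ varies, and its variation can exactly follow the rotation of $y_0$.

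Here is a concrete case. Suppose the inertial relative acceleration $Q_Ta_T-Q_Ba_B$ vanishes on $[t,t+\delta]$. Then $\mathring\xi(s)=\mathring\xi(t)+(s-t)\mathring\nu(t)$ and $y_0(s)=-\mathring\xi(s)/|\mathring\xi(s)|$. So $(a,b)=(\mathring\xi(t),\mathring\nu(t))\neq 0$ gives $\pi_{y_0}(s)(a+(s-t)b)\equiv 0$, i.e.\ $\Lambda_\pi(t,\delta)$ is singular. Yet if this straight line passes close to the body, $y_0$ sweeps a wide angle in each half of the window, so the inequality in \eqref{eq:PE pi} holds on both halves. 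Such a segment can also be embedded in a trajectory satisfying \eqref{eq:PE pi} for all $t$.

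Your fix of redefining $\delta$ has a separate problem. It decouples $\Lambda_\pi$ from the window $\delta$ to which \eqref{eq:PE schur} and $\rho(t,\delta)=\Lambda_\pi^{-1}(t,\delta)B(t,\delta)$ are tied.

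So this step needs an ingredient you do not have. The paper does not derive it by an argument of its own; it imports it from \cite[Lemma~2.7]{hamel2017position}. You must either invoke an external result of that kind, checking its hypotheses carefully, or establish uniform positive definiteness of $\Lambda_\pi(t,\delta)$ directly.

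The step cannot be bypassed. $\frac{1}{\delta}\Lambda_\pi$ is a principal block of $W$, so it must be uniformly positive definite whenever the conclusion holds. By the example above, whatever you use must bring in more than \eqref{eq:PE pi} on the window $\delta$, e.g.\ some excitation of the relative acceleration $Q_Ta_T-Q_Ba_B$ transverse to the line of sight.

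The remaining steps of your proposal are sound.
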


The proof of this result is reported in Appendix~\ref{proof:bearing}.

Condition \eqref{eq:PE pi} is standard in bearing-only estimation problems \citep{hamel2017position} and requires (weak) persistence of excitation of $y_0$. Condition \eqref{eq:PE schur}, the Schur complement of the observability Gramian \eqref{eq:x dynamics}--\eqref{eq:output bearing} with respect to the $(\tilde \xi, \tilde \nu)$-subsystem, guarantees uniform observability of the full extended dynamics, including $\gamma_z$, assuming \eqref{eq:PE pi} holds. In this case, uniform observability is more demanding than in the relative position scenario, as it requires joint excitation of the bearing and the target motion. Specifically, there must exist $\delta_1, \mu_1 > 0$ such that, for all $t \geq 0$,
\begin{equation}
    \frac{1}{\delta_1} \int_t^{t+\delta_1} \Lambda_T^\top(s,t) \, \pi_{y_0}(s) \, \Lambda_T(s,t) \, ds \succeq \mu_1 I_9,
\end{equation}
ensuring that $\Lambda_T$ does not align with the bearing, preventing rotations around the bearing axis from becoming unobservable.

\subsection{$\SO(3)$ attitude reconstruction and cascade stability}

Since the goal is to reconstruct an estimate of \eqref{eq:relative dynamics} at all times, an additional step needs to be taken for the attitude reconstruction. A simple solution to guarantee that the attitude estimate is always an element of $\SO(3)$ is to project $\Gamma$ to $\SO(3)$ (\cite{Sarabandi2023}). However, due to the discontinuous nature of the projection operator from the ambient space $\R^{3\times3}$ to $\SO(3)$, the reconstructed rotation matrix can present jumps. A trivial example is if $\Gamma$ crosses the zero matrix.

In this work we, instead, reconstruct an $\SO(3)$ attitude estimate by cascading a nonlinear complementary filter on $\SO(3)$ (\cite{Mahony_Hamel_Pflimlin}) to the proposed observer \eqref{eq:obs Riccati}. This solution provides smooth attitude estimates at all times at the cost of reducing the convergence to the almost-global case due to inherent topological obstructions. The adopted complementary filter takes the form
\begin{equation}\label{eq:comp filter}
    \dot{\hat{R}} = -\omega_{T\times}\hat{R} + \hat R\Omega_\times + \sigma_{\hat{R}\times}\hat{R},
\end{equation}
with $\hat{R}\in\SO(3)$ the $\SO(3)$ estimate of $R$, and $\sigma_{\hat{R}}\in\R^3$ the innovation term defined as
\begin{equation}\label{eq:innov comp filter}
    \sigma_{\hat{R}} := \sum_{i=1}^3k_i(\hat{R}e_i\times \Gamma^\top e_i).
\end{equation}
Let
\begin{align*}
    \mathcal{E}_s &= \{ I_3 \}, &
    \mathcal{E}_u &= \{R \in \SO(3) \mid \mathrm{tr}(R) = -1\},
\end{align*}
denote the two distinct sets of equilibria of \eqref{eq:comp filter}. 
The innovation term \eqref{eq:innov comp filter} guarantees almost-global asymptotic stability of $\hat{R}$ w.r.t. an $\SO(3)$ approximation of $\Gamma^\top$ and, in particular, almost-global asymptotic stability of $R^\top \hat R$ to $\mathcal{E}_s$, with unstable equilibria $\mathcal{E}_u$, when $\Gamma^\top$ converges to $R$.

The following stability result for the cascade interconnection of \eqref{eq:obs Riccati}-\eqref{eq:comp filter} with output \eqref{eq:out pos}, or \eqref{eq:output bearing}, is a direct consequence of \cite[Theorem 6.1]{vanGoor2025}.

\begin{theorem}[{\em Cascade interconnection}]
    Consider the cascade interconnection \eqref{eq:obs Riccati}-\eqref{eq:comp filter}. Under the assumptions of Lemma~\ref{lemma:GES PE a_T} or of Lemma~\ref{lemma:bearing}, the equilibrium $\bm x = 0$ is globally exponentially stable and $R^\top \hat{R}$ is almost globally asymptotically stable and locally exponentially stable w.r.t. $\mathcal{E}_s$, with unstable equilibrium set $\mathcal{E}_u$.
\end{theorem}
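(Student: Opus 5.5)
The argument is a cascade: the Riccati observer \eqref{eq:obs Riccati} feeds the complementary filter \eqref{eq:comp filter} through $\Gamma$, and nothing flows back. So the first claim follows directly from Lemma~\ref{lemma:GES PE a_T} or Lemma~\ref{lemma:bearing}. Under either set of assumptions the pair $(A,C)$ is uniformly observable. Hence the Riccati solution $P$ of \eqref{eq:riccati} is well defined, bounded and uniformly positive definite, and $\bm x=0$ is globally exponentially stable. In particular $\tilde\gamma_z\to0$ exponentially. Because $Z\in\SO(3)$, the matrix $\Gamma=\Gamma_z Z$ satisfies
$$\|\Gamma^\top - R\|_F=\|Z^\top(\Gamma_z^\top - R_z)\|_F=\|\mathrm{vec}^{-1}(\tilde\gamma_z)\|_F\le c\,e^{-\lambda t}|\bm x(0)|.$$

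It then remains to study the attitude error $\tilde R:=R^\top\hat R$ when the filter is driven by a perturbed input. Differentiate $\tilde R$ using the $R$-dynamics in \eqref{eq:relative dynamics} and \eqref{eq:comp filter}. The terms in $\omega_T$ and $\Omega$ cancel, leaving $\dot{\tilde R}=R^\top\sigma_{\hat R\times}R\,\tilde R=(R^\top\sigma_{\hat R})_\times\tilde R$.

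Next split the innovation as $\sigma_{\hat R}=\sigma^0+\epsilon(t)$. Here $\sigma^0=\sum_i k_i(\hat R e_i\times R e_i)$ is the nominal innovation obtained when $\Gamma^\top=R$, and $\epsilon=\sum_ik_i\,\hat Re_i\times(\Gamma^\top-R)e_i$. The perturbation satisfies $|\epsilon|\le k\|\Gamma^\top-R\|_F$, so it decays exponentially and uniformly in $\hat R$. We have $R^\top\sigma^0=\sum_ik_i\,(R^\top\hat R e_i)\times e_i$. With this, the nominal error system becomes the autonomous system $\dot{\tilde R}=\big(\sum_ik_i\,\tilde R e_i\times e_i\big)_\times\tilde R$ on $\SO(3)$, which is the classical system of \cite{Mahony_Hamel_Pflimlin}.

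For the nominal system I would use the Lyapunov function $\mathcal L=\sum_i k_i(1-e_i^\top\tilde R e_i)$. It is nonincreasing, with $\dot{\mathcal L}=-|\sum_i k_i\,\tilde Re_i\times e_i|^2$. This shows that the critical points are $\tilde R=I_3$ together with rotations by $\pi$ about the principal axes, the set labeled $\mathcal E_u$. Linearization shows $I_3$ is locally exponentially stable and the others are unstable (saddles or sources), each with a stable manifold of measure zero. Choosing distinct $k_i$ makes the critical set isolated.

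The main obstacle is carrying almost-global stability over from the nominal system to the perturbed cascade, since almost-global properties are not robust in general. The plan is to invoke \cite[Theorem 6.1]{vanGoor2025}. Its hypotheses are an exponentially vanishing input perturbation, an autonomous nominal system on a compact manifold with hyperbolic isolated equilibria, and an exponentially stable desired equilibrium. Under these it yields that the set of initial conditions converging to unstable equilibria has measure zero, and that all other trajectories converge to $\mathcal E_s$ with local exponential rate. Should a direct argument be required instead, the steps are as follows. Boundedness of $\mathcal L$ under $\dot{\mathcal L}\le -|\cdot|^2+c|\epsilon|$, with $\epsilon$ integrable, gives convergence to the critical set. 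A stable-manifold argument for asymptotically autonomous systems then excludes convergence to $\mathcal E_u$ except on a null set. Finally, local exponential stability follows from the hyperbolic linearization at $I_3$ combined with the exponential decay of $\epsilon$.
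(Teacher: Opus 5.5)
Your overall route is the paper's: GES of $\bm x=0$ from Lemma~\ref{lemma:GES PE a_T} or Lemma~\ref{lemma:bearing}, exponential decay of $\Gamma^\top-R$, and then \cite[Theorem 6.1]{vanGoor2025} for the attitude part. The paper gives nothing beyond that citation. The gap is in the computation you use to check that theorem's hypotheses. Differentiating $\tilde R=R^\top\hat R$ with $\dot R^\top=R^\top\omega_{T\times}-\Omega_\times R^\top$ gives
\[
\dot{\tilde R}=\tilde R\,\Omega_\times-\Omega_\times\tilde R+(R^\top\sigma_{\hat R})_\times\tilde R .
\]
Only the $\omega_T$ terms cancel, because they act on the left of both $R$ and $\hat R$. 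The $\Omega$ terms act on the right and survive as a conjugation drift. This has three consequences. First, your nominal system is not autonomous unless $\Omega\equiv 0$. Second, with distinct gains your Lyapunov function is not monotone: writing $K=\mathrm{diag}(k_1,k_2,k_3)$ and $w=\sum_i k_i\tilde Re_i\times e_i$, one gets $\dot{\mathcal L}=-|w|^2+\mathrm{tr}((K\Omega_\times-\Omega_\times K)\tilde R)$, and the last term is sign-indefinite whenever $K$ is not scalar. So choosing distinct $k_i$ to isolate the equilibria actually breaks the argument. Third, the resulting critical set (three $\pi$-rotations about the principal axes) is not the set $\mathcal E_u=\{\mathrm{tr}(\cdot)=-1\}$ of the statement anyway.

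The fix is to use equal gains $k_i=k$, which is what the stated $\mathcal E_u$ presupposes. Then change variables to $F:=Q_B\tilde RQ_B^\top=Q_T\hat RQ_B^\top$, which removes the drift: $\dot F=(Q_T\sigma_{\hat R})_\times F$.
With $k_i=k$ one has $Q_T\sigma^0=k\,\mathrm{vex}(F^\top-F)$, so the nominal system $\dot F=k\,(\mathrm{vex}(F^\top-F))_\times F$ is autonomous. The perturbation $Q_T\epsilon$ decays exponentially. Since $\mathrm{tr}F=\mathrm{tr}\tilde R$, the sets $\mathcal E_s,\mathcal E_u$ are the same in both coordinates. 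Equivalently, in $\tilde R$ coordinates the drift preserves $\mathrm{tr}\tilde R$, so $\mathcal L=k(3-\mathrm{tr}\tilde R)$ still decreases.

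However, the unstable equilibria are now all of $\mathcal E_u=\{2uu^\top-I:\ u\in S^2\}$, a two-dimensional manifold. Linearizing at $2uu^\top-I$ gives $\dot x=2k\,uu^\top x$: one repelling transverse direction and two zero eigenvalues along $\mathcal E_u$. These equilibria are neither isolated nor hyperbolic. So the hypothesis you attribute to the cascade theorem, and the stable-manifold step of your fallback argument, must be replaced by the version for a normally hyperbolic repelling set of equilibria. Its stable set is $\mathcal E_u$ itself, hence of measure zero. With these corrections your plan coincides with the paper's one-line appeal to \cite[Theorem 6.1]{vanGoor2025}.
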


\section{Simulations}\label{section:simulations}
This section illustrates the performance of the proposed observer on a scenario inspired by a multirotor UAV (body) hovering above a moving ship (target). The ship follows a circular horizontal trajectory, with orbital angular velocity $\omega_{orb}$, superimposed on a vertical oscillation with angular frequency $\omega_z$ induced by the waves.

The target motion is specified as
\begin{align*}
p_T(0)&=\begin{bmatrix}5 & 0 & 0\end{bmatrix}^\top,\quad 
a_T(t)=Q_T^\top(t)\begin{bmatrix}-5\omega_{orb}^2\cos (\omega_{orb}t) \\ -5\omega_{orb}^2\sin (\omega_{orb}t) \\ -2\omega_z^2\sin(\omega_zt) -g\end{bmatrix},\\
    v_T(0)&=\begin{bmatrix}0 & 5\omega_{orb} & 0.5\omega_z\end{bmatrix}^\top,\quad \omega_T(t) =\begin{bmatrix}
    0& 0 & \omega_{orb}
\end{bmatrix}^\top, 
\end{align*}
while the body motion is characterized by
\begin{align*}
    p_B(t)\equiv \begin{bmatrix}
        0 & 0 & 20
    \end{bmatrix}^\top, \quad  v_B(t) \equiv \Omega(t)\equiv 0,\quad  a_B(t)=-ge_3 .
\end{align*}

The auxiliary state $Z$ is initialized to $Z(0) = I_3$. 

The is implemented with $P(0) = 3I_{15}, \ D = 20I_3$ and $V = 0.001I_{15}$, and initial estimates $\gamma_z(0) =$vec$(I_3),\ \hat \xi = \begin{bmatrix}
    1&-0.5&0.8
\end{bmatrix}^\top,\ \hat \nu = \begin{bmatrix}
    -0.5&0.8&-0.3
\end{bmatrix}^\top$.

The simulation is divided into three phases, and is reported for both output configurations: relative position (Fig.~\ref{fig:relative position plot}) and bearing-only (Fig.~\ref{fig:bearing plot}).

In the initial phase, we set $\omega_{orb} = 0.4$ and $\omega_z = 2$. In this regime, occurring in the white interval before the shaded region in Figs.~\ref{fig:relative position plot}-\ref{fig:bearing plot}, the target undergoes a persistently excited horizontal motion combined with vertical oscillations, and conditions \eqref{eq:PE a}, \eqref{eq:PE pi}, and \eqref{eq:PE schur} are satisfied. As seen in Figs.~\ref{fig:relative position plot}–\ref{fig:bearing plot}, the estimation errors are approaching zero in both the relative position and bearing-only cases.

The middle phase, corresponding to the shaded interval in Figs.~\ref{fig:relative position plot}–\ref{fig:bearing plot}, is characterized by $\omega_{orb} = \omega_z = 0$. The target then follows a constant-velocity trajectory, so that \eqref{eq:PE a} and \eqref{eq:PE schur} are no longer satisfied and uniform observability is lost. This loss of excitation is reflected in the estimation performance: the attitude error ceases to decrease, which in turn causes the position estimate to diverge.

In the final phase, the orbital and vertical angular velocities are restored to $\omega_{orb} = 0.4$ and $\omega_z = 2$. The renewed excitation re-establishes the observability conditions, and the observer drives the full estimation error back to zero. This recovery is clearly visible in both measurement configurations, confirming the link between the proposed PE conditions and the convergence properties of the observer.

\begin{figure}[ht]
    \centering
    \includegraphics[width=\linewidth, trim={0cm 0cm 0cm 0.4cm},clip]{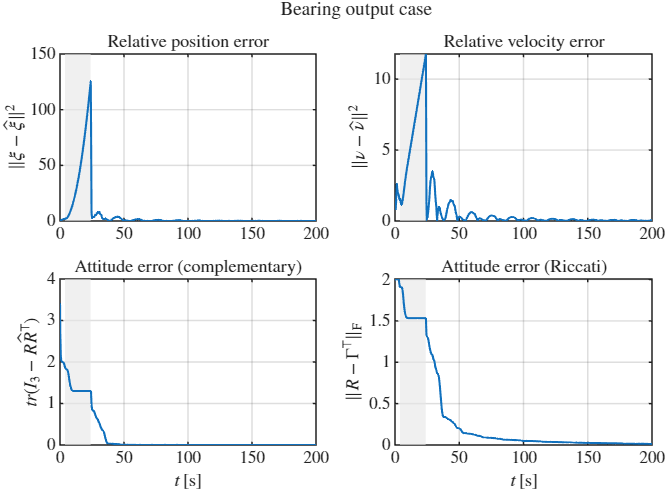}
    \caption{Time evolution of the estimates errors for the relative position output with no acceleration (shaded region) for $t\in[3.927,23.927]$.}
    \label{fig:relative position plot}
\end{figure}
\begin{figure}[ht]
    \centering
    \includegraphics[width=\linewidth, trim={0cm 0cm 0cm 0.4cm},clip]{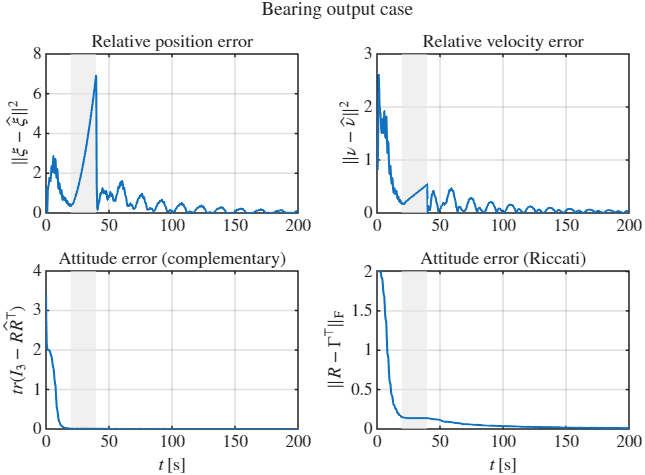}
    \caption{Time evolution of the estimates errors for the bearing-only output with no acceleration (shaded region) for $t\in[19.635,39.635]$.}
    \label{fig:bearing plot}
\end{figure}

\section{Conclusions}\label{section:conclusions}

In this paper, we addressed the estimation of the relative pose and linear velocity of a body with respect to a potentially moving target using a dual-IMU setup, with either relative position or bearing measurements. Rather than designing the observer directly on the Lie group $\SE$, we embedded the extended pose estimates into $\R^{15}$ to express the relative dynamics as a linear time-varying system. This formulation allowed us to design a deterministic Riccati observer that guarantees exponential convergence of the estimation error under uniform observability conditions, derived from the motions of both the target and the vehicle.
To obtain smooth relative attitude estimates, a nonlinear complementary filter on $\SO(3)$ was cascaded with the Riccati observer. This approach yields almost-global asymptotic convergence of the attitude error while preserving exponential convergence of the position and velocity estimates.
Simulation results illustrate the effect of uniform observability on the observer's performance for both relative position and bearing-only measurements.
Future work will consider extending the observer to handle biased IMU measurements.

\section{Appendix}\label{section:appendix}

\subsection{Proof of Lemma~\ref{lemma:GES PE a_T}}\label{proof: lemma GES PE a_T}

Lemma~\ref{lemma:obs Abar,C = obs A,C}, reported in Appendix~\ref{section:Technical Lemmas}, guarantees that uniform observability of
$(A,C)$ is equivalent to uniform observability of $(\bar{A}\otimes I_3,TCT^\top)$, where $T:=I_5\otimes Q_B$.

By definition \eqref{eq:C position}, it can be verified that $C= TCT^\top$. Moreover, from a similar argument to the proof of Lemma 1 in \cite[Appendix A]{Benahmed2025}, we can conclude that uniform observability of the pair $(\bar{A}\otimes I_5, C)$ is equivalent to uniform observability of the pair $(\bar{A},\bar{C})$. Therefore, we can restrict the following analysis to uniform observability of $(\bar{A},\bar{C})$.

The state transition matrix $\bar\Phi(s,t)$ associated with $\bar{A}(t)$ can be computed from the Peano-Baker series expansion, yielding
\begin{align*}
\bar\Phi(s,t) &= 
\begin{bmatrix}
     1 & s-t & -\int_{t}^{s}\int_{t}^{\tau}a^{z\top}_T(\sigma)\, d\sigma\, d\tau \\
     0 & 1 & -\int_{t}^{s}a^{z\top}_T(\tau)\,d\tau\\
     0_{3\times 1} & 0_{3 \times 1} & 0_{3\times 3}
 \end{bmatrix}
 .
\end{align*}
It follows that
\begin{align*}
    \bar C\bar{\Phi}(s,t) &= 
    \left[1 \; 0_{1\times 4}  \right]\begin{bmatrix}
        1 & s-t & -\int_{t}^{s}\int_{t}^{\tau}a^{z\top}_T(\sigma)\, d\sigma\, d\tau \\
        0 & 1 & -\int_{t}^{s}a^{z\top}_T(\tau)d\tau\\
        0_{3\times 1} & 0_{3 \times 1} & I_3
    \end{bmatrix}  \\
    &= \begin{bmatrix}
        1 & s-t & -\int_{t}^{s}\int_{t}^{\tau}a^{z\top}_T(\sigma)\, d\sigma\, d\tau
    \end{bmatrix}.
\end{align*}
Denote $b(s,t):=-\int_{t}^{s}\int_{t}^{\tau}a^z_T(\sigma)\, d\sigma\, d\tau$.
The observability Gramian $\overline{W}(t, t+ \delta)$ takes the form
\begin{align*}
    \overline{W}(t, t+ \delta) 
    &=  \frac{1}{\delta} \int_t^{t+\delta} \bar{\Phi}^\top(s,t) \bar{C}^\top \bar{C}\bar{\Phi}(s,t) ds \\
    &= \frac{1}{\delta} \int_t^{t+\delta}\begin{bmatrix}
        1 \\ s-t \\ b(s,t)
    \end{bmatrix} 
    \begin{bmatrix}
        1 & s-t & 
        b^\top(s,t)
    \end{bmatrix}
    \, ds\\
    &=\frac{1}{\delta}
    \begin{bmatrix}
        \delta  & \frac{\delta^2}{2} & \int_t^{t+\delta}b^\top \\
        \frac{\delta^2}{2} & \frac{\delta^3}{3} & \int_t^{t+\delta}(s-t)b^\top \\
        \int_t^{t+\delta}b & \int_t^{t+\delta}(s-t)b & \int_t^{t+\delta}bb^\top
    \end{bmatrix}.
\end{align*}

Rewrite $\overline{W}$ as the block matrix
    \begin{align*}
        \overline{W}=\frac{1}{\delta}\begin{bmatrix}
            \Lambda_\delta & B \\
            B^\top & \Lambda_b
        \end{bmatrix}
        \end{align*}
        with
        \begin{align*}
        &\Lambda_\delta = 
        \begin{bmatrix}
             \delta & \frac{\delta^2}{2} \\ 
             \frac{\delta^2}{2} & \frac{\delta^3}{3}
        \end{bmatrix},
        \quad B = 
        \begin{bmatrix}
        \int_t^{t+\delta}b^\top(s,t)\,ds \\
        \int_t^{t+\delta}(s-t)b^\top(s,t)\,ds
        \end{bmatrix},\\
        &\Lambda_b = \begin{bmatrix}
            \int_t^{t+\delta}b(s,t)b^\top(s,t)\,ds
        \end{bmatrix} .
    \end{align*}
It is straightforward to prove that $\Lambda_\delta$ is uniformly positive definite for all $\delta>0$.
Therefore, by the Lemma~\ref{lemma:schur} in Appendix~\ref{section:Technical Lemmas}, it is sufficient to prove uniform positive definiteness of the Schur complement of the Gramian w.r.t. $\Lambda_{\delta}$, i.e. of $\overline{W}/\Lambda_\delta := \Lambda_b - B^\top \Lambda_\delta^{-1} B$, in order to prove uniform positive definiteness of the Gramian.

Denote $d(s,t):= \left[1\ \ (s-t)\right]^\top\in\R^2$, and notice that
\begin{align*}
\int_t^{t+\delta}d(s,t)d^\top(s,t)\,ds  = \Lambda_\delta,\quad
\int_t^{t+\delta}d(s,t)b^\top(s,t)\,ds = B.
\end{align*}
Direct computation shows that 
\begin{align*}
    \overline{W}/\Lambda_\delta = \frac{1}{\delta}\int_{t}^{t+\delta} 
\left(b^\top - d^\top \Lambda_\delta^{-1} B\right)^\top\left(b^\top - d^\top \Lambda_\delta^{-1} B\right)\, ds.
\end{align*}
Condition \eqref{eq:schur_complement_lemma3} implies that there exist $\delta,\mu>0$ such that for all $t\geq0$, $v\in S^2$,
\begin{equation*}\label{eq:Schur obs}
    v^\top \overline{W}/\Lambda_\delta v 
= \frac{1}{\delta}\int_{t}^{t+\delta} 
\norm{ \left(b^\top(s,t) - d^\top(s,t) \Lambda_\delta^{-1} B\right) v }^2\, ds \geq \mu.
\end{equation*}
Following \cite{Morin2017}, let us proceed by contradiction, assuming the above condition false; that is, for all $\delta,\mu>0$ there exist a sequence $\{t_k\}_{k\in \mathbb{N}}$, and $v\in S^2$ such that
\begin{equation}\label{eq:integral limit CRv}
    \lim_{k\to\infty} \frac{1}{\delta} \int_{t_k}^{t_k+\delta} \norm{\left(b^\top(s,t) - d^\top(s,t) \Lambda_\delta^{-1} B\right) v}^2ds = 0.
\end{equation}
By continuity and boundedness of $a^z_T$, it follows that $\frac{d^2}{ds^2}\left(b^\top(s,t) - d^\top(s,t) \Lambda_\delta^{-1} B\right) v$ is well defined and bounded. Applying Lemma 1 in \cite{Morin2017} to \eqref{eq:integral limit CRv}, we obtain
\begin{align*}
\lim_{k\to\infty} \frac{1}{\delta}\int_{t_k}^{t_k+\delta}\norm{\frac{d^2}{ds^2}\left(b^\top(s,t_k) - d^\top(s,t_k) \Lambda_\delta^{-1} B\right) v}^2ds = 0. \label{eq:dds integrand Schur}
\end{align*}
By direct differentiation, it follows that
\begin{align*}
    \frac{d^2}{ds^2}\left(b^\top(s,t_k) - d^\top(s,t_k) \Lambda_\delta^{-1} B\right) v= -a^{z\top}_T(s)v.
\end{align*}
This result implies
\begin{align*}
\lim_{k\to\infty}\frac{1}{\delta}\int_{t_k}^{t_k+\delta}v^\top a^z_T(s)a^{z\top}_T(s)v\,ds =0,
\end{align*}
contradicting \eqref{eq:PE a} and proving the lemma.

\subsection{Proof of Lemma~\ref{lemma:bearing}}\label{proof:bearing}
By Lemma~\ref{lemma:obs Abar,C = obs A,C} in Appendix~\ref{section:Technical Lemmas}, uniform observability of the pair $(A,C)$ is equivalent to uniform observability of the pair $(\bar{A}(t)\otimes I_5,T(t)C(t)T^\top(t))$, where $T(t):=I_5\otimes Q_B(t)$. The state transition matrix $\Phi(s,t)$ associated with $\bar A(t)\otimes I_5$ can be expressed as
\begin{align*}
\Phi(s,t) &= 
\begin{bmatrix}
     1 & s-t & -\int_{t}^{s}\int_{t}^{\tau}a^{z\top}_T(\sigma)\, d\sigma\, d\tau \\
     0 & 1 & -\int_{t}^{s}a^{z\top}_T(\tau)\,d\tau\\
     0_{3\times 1} & 0_{3 \times 1} & I_3
 \end{bmatrix}\otimes I_3.
\end{align*}
From \eqref{eq:output bearing}, it follows that $C_T(t):=T(t)C(t)T^\top(t) = [1\ \ 0_{1\times 4}]\otimes \pi_{y_0}$. As a consequence, the term $C_T(s)\Phi(s,t)$ takes the form
\begin{align*}
     &C_T(s)\Phi(s,t) =\\&= 
    (\left[1 \; 0_{1\times 4}  \right]\otimes \pi_{y_0})\begin{bmatrix}
        1 & s-t & -\int_{t}^{s}\int_{t}^{\tau}a^{z\top}_T(\sigma)\, d\sigma\, d\tau \\
        0 & 1 & -\int_{t}^{s}a^{z\top}_T(\tau)d\tau\\
        0_{3\times 1} & 0_{3 \times 1} & I_3
    \end{bmatrix}\otimes I_5  \\
    &= \begin{bmatrix}
        1 & s-t & -\int_{t}^{s}\int_{t}^{\tau}a^{z\top}_T(\sigma)\, d\sigma\, d\tau
    \end{bmatrix}\otimes \pi_{y_0}\\
    &= \pi_{y_0}\begin{bmatrix}
        I_3 & (s-t)I_3 & \Lambda_T(s,t)
    \end{bmatrix} ,
\end{align*}
where we used the identity $(A\otimes B)(C\otimes D)= (AC)\otimes (BD)$ for $A$, $B$, $C$, $D$ matrices of appropriate dimensions.

The Gramian $W(t, t+ \delta)$ associated to the pair $(\bar{A}(t)\otimes I_5,C_T(t))$ takes the form
\begin{align*}
    &W(t, t+ \delta) =\\
    &=  \frac{1}{\delta} \int_t^{t+\delta} \Phi^\top(s,t) C_T(s)^\top C_T(s)\Phi(s,t) ds \\
    &=  \frac{1}{\delta} \int_t^{t+\delta} \begin{bmatrix}
        I_3 & (s-t)I_3 & \Lambda_T
    \end{bmatrix}^\top\pi_{y_0}\begin{bmatrix}
        I_3 & (s-t)I_3 & \Lambda_T
    \end{bmatrix} ds \\
    &= \frac{1}{\delta} \int_t^{t+\delta} \begin{bmatrix}
        \pi_{y_0} & (s-t)\pi_{y_0} & \pi_{y_0}\Lambda_T \\
        (s-t)\pi_{y_0} & (s-t)^2\pi_{y_0} & (s-t)\pi_{y_0}\Lambda_T\\
        \Lambda_T^\top \pi_{y_0} & \Lambda_T^\top (s-t)\pi_{y_0} & \Lambda_T^\top \pi_{y_0}\Lambda_T
    \end{bmatrix} ds.
\end{align*}

Notice that $W$ can be rewritten as the block matrix
    \begin{align}
        W=\frac{1}{\delta}\begin{bmatrix}
            A_\pi & B \\
            B^\top & A_\Lambda
        \end{bmatrix}
        \end{align}
        where
        \begin{align}
        A_\Lambda = \int_t^{t+\delta}\Lambda_T^\top(s,t) \pi_{y_0}(s)\Lambda_T(s,t)ds.
    \end{align}
By \cite[Lemma 2.7]{hamel2017position}, condition \eqref{eq:PE pi} is sufficient to guarantee that $A_\pi$ is uniformly positive definite. 
Moreover, it can be shown by direct computation that \eqref{eq:PE schur} is equivalent to uniform positive definiteness of the $W/A_\pi$, proving the claim by Lemma~\ref{lemma:schur} in Appendix~\ref{section:Technical Lemmas}.

\subsection{Technical Lemmas}\label{section:Technical Lemmas}
\begin{lemma}\label{lemma:obs Abar,C = obs A,C}
    Consider system \eqref{dyn x}. Denote $T(t):=I_5\otimes Q_B(t)$, then uniform observability of the pair $(A,C)$ is equivalent to uniform observability of the pair $(\bar{A}\otimes I_3,TCT^\top)$.
\end{lemma}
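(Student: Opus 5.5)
The plan is to apply a Lyapunov (state) transformation $\bm z = T^\top(t)\bm x$ with $T(t) = I_5\otimes Q_B(t)$. Uniform observability is invariant under such transformations, because $T$ is orthogonal and bounded with a bounded inverse. Since $Q_B$ is differentiable with $\dot Q_B = Q_B\Omega_\times$, we have $\dot T = T(I_5\otimes\Omega_\times) = -T S(t)$.

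The transformed state obeys $\dot{\bm z} = (\dot T^\top + T^\top A)T\,\bm z$. Compute the two pieces separately:
\begin{itemize}
\item $\dot T^\top T = (I_5\otimes \Omega_\times)^\top = -(I_5\otimes\Omega_\times) = S$.
\item $T^\top(\bar A\otimes I_3)T = (I_5\bar A I_5)\otimes(Q_B^\top Q_B) = \bar A\otimes I_3$, by the mixed-product property.
\item $T^\top S T = -I_5\otimes (Q_B^\top\Omega_\times Q_B)$, and this is \emph{not} $-S$ in general.
\end{itemize}
This suggests the opposite transformation is the right one: take $\bm z = T(t)\bm x$. Then
\[
\dot{\bm z} = (\dot T T^\top + TAT^\top)\bm z .
\]
Here $\dot T T^\top = T(I_5\otimes\Omega_\times)T^\top$, while $TST^\top = -T(I_5\otimes\Omega_\times)T^\top$. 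These two terms cancel exactly, and $T(\bar A\otimes I_3)T^\top = \bar A\otimes I_3$. The output becomes $\bm y = C\bm x = CT^\top\bm z$, and multiplying by the orthogonal matrix $T$ (rotating the output into the inertial frame) gives $T\bm y = TCT^\top \bm z$; note that $T$ acts on the 3-dimensional output through the $Q_B$ block. Rotating the output by an orthogonal matrix leaves $C^\top C$ unchanged, so the Gramian is not affected by this step.

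Next, relate the Gramians. The transition matrices satisfy $\Phi_z(s,t) = T(s)\Phi(s,t)T^\top(t)$. Hence
\[
W_z(t,t+\delta) = T(t)\,W(t,t+\delta)\,T^\top(t),
\]
using $(TCT^\top)(s)\,T(s) = T(s)C(s)$ up to the output rotation, which cancels in the product $C^\top C$. Since $T(t)$ is orthogonal, $W_z\succeq\mu I$ if and only if $W\succeq \mu I$, with the same $\delta$ and $\mu$. This gives the equivalence in both directions.

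The main obstacle is the sign bookkeeping around $S=-I_5\otimes\Omega_\times$ and $\dot Q_B = Q_B\Omega_\times$, which determines whether $T$ or $T^\top$ is the correct change of coordinates. A secondary point is the dimensions of the output rotation. The output matrix should be treated as $C = \bar C\otimes I_3$ or $\bar C\otimes\pi_y$, and the output rotation should be $Q_B$ acting on $\R^3$. This is consistent with $TCT^\top$ read as $Q_B C T^\top$, which yields $\bar C\otimes Q_B\pi_yQ_B^\top = \bar C\otimes\pi_{y_0}$, as used in the paper.
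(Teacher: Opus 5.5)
Your proposal is correct and follows the paper's proof. The paper also uses the change of coordinates $\bar{\bm x}=T\bm x$, which cancels the rotational term $S(t)$, gives $\Phi(s,t)=T^\top(s)\bar\Phi(s,t)T(t)$, and makes the two Gramians congruent via the orthogonal matrix $T(t)$. Beyond the paper, you read $TCT^\top$ as $Q_B C T^\top$ (the output rotated by $Q_B$), and you note that the congruence gives both directions of the equivalence with the same $\delta,\mu$; the paper leaves both points implicit and only writes out one implication.
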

\begin{proof}
Consider the change of coordinates $\bar{\bm x}(t) = T(t) \bm x(t)$. 
Then, by direct differentiation,
\begin{align*}
    \dot{\bar{\bm{x}}} =& \dot T  \bm{x} + T \dot{\bm{x}}  \\
    =& (I_5\otimes (Q_B \Omega_\times)) \bm{x} + (I_5 \otimes Q_B)(\bar{A} \otimes I_3 -I_5\otimes \Omega_\times)\bm{x} \\
    =& (I_5\otimes (Q_B \Omega_\times)) \bm{x} + (I_5 \otimes Q_B)(\bar{A} \otimes I_3)\bm{x} \\&- (I_5 \otimes Q_B) (I_5\otimes\Omega_\times) \bm{x} \\
    =& (I_5 \otimes Q_B)(\bar{A} \otimes I_3)\bm{x} \\
    =& (\bar{A} \otimes I_3)(I_5 \otimes Q_B)\bm{x} \\
    =& (\bar{A} \otimes I_3)\bar{\bm{x}},
\end{align*}
where we used the identity $(A\otimes B)(C\otimes D)= (AC)\otimes (BD)$ for $A$, $B$, $C$, $D$ matrices of appropriate dimensions.
The state transition matrix $\Phi(s,t)$ of \eqref{dyn x} is related to the state transition matrix $\bar{\Phi}(s,t)$ of the $\dot{\bar{\bm x}}$-dynamics by
\begin{align*}
    \Phi(s,t) = T^\top(s) \bar{\Phi}(s,t)T(t).
\end{align*}
The observability Gramian of $(A(t), C(t))$ is
\begin{align*}
    &W(t, t+ \delta) =\\&= \frac{1}{\delta} \int_t^{t+\delta} \Phi^\top(s,t) C^\top(s)C(s)\Phi(s,t) ds\\
    &= \frac{1}{\delta} \int_t^{t+\delta} T^\top (t)\bar{\Phi}^\top(s,t) T(s) C^\top(s)C(s)T^\top(s) \bar{\Phi}(s,t)T(t) ds\\
    &= T^\top(t) \frac{1}{\delta} \int_t^{t+\delta}  \bar{\Phi}^\top(s,t) C_T^\top(s)C_T(s) \bar{\Phi}(s,t) ds\ T(t),
\end{align*}
where $C_T(t) = T(t)C(t)T^\top(t)$. 
Hence uniform observability of the pair $(\bar{A}(t)\otimes I_3, T(t)C(t)T^\top(t))$ implies uniform observability of the pair $(A(t), C(t))$.    
\end{proof}
 \begin{lemma}\label{lemma:schur}
        Let
        \begin{align*}
            W(t,t+\delta)&=\begin{bmatrix}
                W_E(t,\delta)& W_F(t,\delta)\\
                W_F^{\top}(t,\delta) & W_G(t,\delta)
            \end{bmatrix}\\&:=\frac{1}{\delta}\int_t^{t+\delta} \begin{bmatrix}
            E(s) & F(s) \\ F^\top(s) & G(s)
            \end{bmatrix}ds,
        \end{align*}
        with $E\in\R^{n\times n}$, $F \in \R^{n\times m}$ and $G \in \R^{m\times m}$ continuous uniformly bounded matrix-valued functions.
        
        Suppose there exist $\delta,\mu>0$ such that, for all $t\geq 0$,
        \begin{subequations}\label{eq:schur_lemma_conditions}
        \begin{align}
            &W_E(t,\delta) \succeq\mu I_n,\\
            &W/W_E(t,\delta):=\label{eq:schur_complement_lemma3} W_G-W_F^\top W_E^{-1}W_F \succeq\mu I_m.
        \end{align}        \end{subequations} 
        Then, there exist $\delta,\mu^\star>0$ such that, for all $t\geq0$,
        \begin{equation}\label{eq:result_lemma3}
            W(t,t+\delta)\succeq\mu^\star I_3.
        \end{equation}
    \end{lemma}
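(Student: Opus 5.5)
The plan is to use the block $LDL^\top$ factorization of $W$ with respect to $W_E$, and then control the lower-triangular factor uniformly in $t$. Since $W_E(t,\delta)\succeq\mu I_n$, the matrix $W_E$ is invertible, and we can write
\begin{equation*}
W=\begin{bmatrix} I_n & 0\\ W_F^\top W_E^{-1} & I_m\end{bmatrix}
\begin{bmatrix} W_E & 0\\ 0 & W/W_E\end{bmatrix}
\begin{bmatrix} I_n & W_E^{-1}W_F\\ 0 & I_m\end{bmatrix}.
\end{equation*}
The middle factor is bounded below by $\mu I_{n+m}$ by the two hypotheses. It then suffices to show that the outer factor $L(t):=\begin{bmatrix} I_n & W_E^{-1}W_F\\ 0 & I_m\end{bmatrix}$ satisfies $L^\top L\succeq c\,I$ for some $c>0$ independent of $t$. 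This gives $W\succeq \mu c\, I$, so we may take $\mu^\star=\mu c$. (The $I_3$ in the displayed conclusion should read $I_{n+m}$.)

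To obtain $c$, note that $L$ is unipotent, so $L^{-1}=\begin{bmatrix} I_n & -W_E^{-1}W_F\\ 0 & I_m\end{bmatrix}$. Then $\sigma_{\min}(L)=1/\|L^{-1}\|\ge 1/(1+\|W_E^{-1}W_F\|)$. We have $\|W_E^{-1}\|\le 1/\mu$ by the first hypothesis. By uniform boundedness of $F$, say $\|F(s)\|\le \bar f$, we get $\|W_F(t,\delta)\|\le \frac1\delta\int_t^{t+\delta}\|F\|\,ds\le \bar f$. Hence $\|W_E^{-1}W_F\|\le \bar f/\mu$ for all $t$, and we can set $c=(1+\bar f/\mu)^{-2}$.

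Equivalently, and perhaps more transparently for the write-up, we can argue with vectors. For $z=(x,y)$, completing the square gives
\begin{equation*}
z^\top W z=(x+W_E^{-1}W_F y)^\top W_E (x+W_E^{-1}W_F y)+y^\top (W/W_E)\, y\ge \mu\bigl(|x+Ky|^2+|y|^2\bigr),
\end{equation*}
with $K=W_E^{-1}W_F$. The task is then to lower bound $|x+Ky|^2+|y|^2$ by $c|z|^2$, which follows from $\|K\|\le\bar f/\mu$: either $|y|\ge \alpha|x|$, or $|x+Ky|\ge |x|-\|K\||y|$ is large.

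The only potential obstacle is uniformity in $t$. It is handled entirely by the uniform bound on $\|W_F\|$ coming from the boundedness of $F$, which is where that hypothesis is needed. The uniform bounds on $E$ and $G$ are not needed for the lower bound, and the same $\delta$ is retained. Continuity of $E,F,G$ serves only to make the integrals well defined.
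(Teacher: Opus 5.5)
Your proof is correct, and it takes a different route from the paper's. The paper does not use the congruence $W=L^\top\,\mathrm{diag}(W_E,W/W_E)\,L$. It works with determinants and eigenvalues instead. It uses $\det W=\det W_E\,\det(W/W_E)\ge\mu^{n+m}$, bounds $\lambda_{\max}(W)\le\mu_{\max}$ via Gershgorin using boundedness of all three blocks $E,F,G$, and then extracts $\lambda_{\min}(W)\ge\mu^{n+m}/\mu_{\max}^{n+m-1}$ from $\det W\le\lambda_{\min}(W)\,\mu_{\max}^{n+m-1}$. That last step tacitly needs $W\succ 0$, which follows from the Schur-complement hypotheses but is not stated there.

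Your factorization gives the quadratic-form lower bound directly and handles that positivity automatically. It needs only the uniform bound $\|W_F\|\le\bar f$, so boundedness of $E$ and $G$ is indeed superfluous. It also yields the constant $\mu^\star=\mu(1+\bar f/\mu)^{-2}$, which does not depend on the dimension. The paper's constant $\mu(\mu/\mu_{\max})^{n+m-1}$ decays geometrically in $n+m$. What the paper's route buys is brevity: it is a short appeal to standard determinant and eigenvalue facts. Your correction of $I_3$ to $I_{n+m}$ in the conclusion is also right.
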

\begin{proof}
Conditions \eqref{eq:schur_lemma_conditions} imply the existence of $\delta,\mu>0$ such that for all $t\geq0$,
    \begin{align*}\lambda_{min}\left(W_E(t,\delta)\right)\geq\mu, \quad\lambda_{min}\left(W/W_E(t,\delta)\right)\geq\mu,
    \end{align*}
    where $\lambda_{min}(N)$ (resp. $\lambda_{max}(N)$) represents the minimum (resp. maximum) eigenvalue of the square matrix $N\in\R^{p\times p}$, for some $p\in\mathbb{N}$. This, in turn, implies
    \begin{equation*}
        \det(W_E)\geq \mu^{n},\quad \det(W/W_E) \geq \mu^{m}.
    \end{equation*}

    The Schur complement satisfies the determinant property
    \begin{equation*}\label{eq:Schur_determinant}
        \text{det}(W) = \text{det}(W_E)\text{det}(W/W_E)\geq \mu^{n+m}.
    \end{equation*}

    Boundedness of the matrices $E,F$ and $G$ guarantees, by the Gershgorin circle theorem, the existence of $\mu_{max}>0$ such that $\lambda_{max}\left(W(t,t+\delta)\right)\leq\mu_{max}$,
    which implies
    \begin{equation*}
        \det(W) \leq\lambda_{min}(W)\mu_{max}^{m+n-1}.
    \end{equation*}
    Combining the last two inequalities yields
    \begin{equation*}
        \mu^{m+n}\leq \det(W)\leq \lambda_{min}\mu_{max}^{m+n-1}\Rightarrow \lambda_{min}(W)\geq \frac{\mu^{n+m}}{\mu_{max}^{n+m-1}},
    \end{equation*}
    which is equivalent to condition \eqref{eq:result_lemma3} with $\mu^\star = \frac{\mu^{n+m}}{\mu_{max}^{n+m-1}}$.
\end{proof}

\bibliography{ifacconf}

\end{document}